\newtheorem{theorem}{Theorem}[section]
\newtheorem{example}{Example}[section]
\newtheorem{definition}{Definition}[section]
\newtheorem{lemma}{Lemma}[section]
\newtheorem{corollary}{Corollary}[section]
\newcommand{\qed}{\unskip\hspace*{1em}\hspace{\fill}$\Box$}
\newenvironment{proof}[1][Proof]{\begin{trivlist}
		\item[\hskip \labelsep {\it #1:}]}{%
		\qed\end{trivlist}}
\newenvironment{proofof}[1]{{\vspace*{5pt} \noindent\bf Proof of #1:  }}{\hfill\rule{2mm}{2mm}\vspace*{5pt}}
\newcommand{\dma}{\textsf{DCA}\space}
\newcommand{\opt}{\mathsf{OPT}}
\newcommand{\alg}{\mathsf{ALG}}
\newcommand{\lb}{\theta}
\title{Defending against Contagious Attacks on a Network \\
	with Resource Reallocation\footnote{Funded by the Science and Technology Development Fund, Macau SAR (File no. SKLIOTSC-2018-2020), the Start-up Research Grant of University of Macau (File no. SRG2020-00020-IOTSC). This work was supported in part by the Science and Technology Development Fund, Macau SAR under File no. 0060/2019/A1, and in part by Research Grant of University of Macau under Grant MYRG2018-00237-FST.}}
\author{
		Rufan Bai\textsuperscript{\rm 1},
		Haoxing Lin\textsuperscript{\rm 1},
		Xinyu Yang\textsuperscript{\rm 1},
		Xiaowei Wu\textsuperscript{\rm 1},
		Minming Li\footnote{City University of Hong Kong Shenzhen Research Institute, Shenzhen, P. R. China. The work described in this paper was partially sponsored by Project 11771365 supported by NSFC.}\textsuperscript{\rm 2},
		Weijia Jia
		\footnote{BNU-UIC Institute of Artificial Intelligence and Future Networks, Beijing Normal University (Zhuhai), Guangdong, China. The work was partially supported by Chinese National Research Fund (NSFC) Key Project No. 61532013; NSFC grant No. 61872239; and Guangdong Provincial Key Lab of AI and Multi-modal Data Processing at BNU-HKBU UIC.}\textsuperscript{\rm 3}\\
}
\begin{document}

	
	\maketitle
	
	\begin{abstract}
		In classic network security games, the defender distributes defending resources to the nodes of the network, and the attacker attacks a node, with the objective to maximize the damage caused.
		Existing models assume that the attack at node $u$ causes damage only at $u$.
		However, in many real-world security scenarios, the attack at a node $u$ spreads to the neighbors of $u$ and can cause damage at multiple nodes, e.g., for the outbreak of a virus.
		In this paper, we consider the network defending problem against contagious attacks.
		
		Existing works that study shared resources assume that the resource allocated to a node can be shared or duplicated between neighboring nodes.
		However, in real world, sharing resource naturally leads to a decrease in defending power of the source node, especially when defending against contagious attacks.
		To this end, we study the model in which resources allocated to a node can only be transferred to its neighboring nodes, which we refer to as a reallocation process.
		
		We show that this more general model is difficult in two aspects: (1) even for a fixed allocation of resources, we show that computing the optimal reallocation is \NP-hard; (2) for the case when reallocation is not allowed, we show that computing the optimal allocation (against contagious attack) is also \NP-hard.
		For positive results, we give a mixed integer linear program formulation for the problem and a bi-criteria approximation algorithm.
		Our experimental results demonstrate that the allocation and reallocation strategies our algorithm computes perform well in terms of minimizing the damage due to contagious attacks.
	\end{abstract}

	\section{Introduction}
	In recent years, security games have attracted much research attention within the artificial intelligence community and have been widely adopted for the computation of optimal allocation of security resources in many areas of the field~\cite{sagt/LetchfordCM09,daglib/0040483,aamas/YinT12,ijcai/SinhaFAKT18}. 
	A considerable portion of these works consider the security games played within a network structure, i.e., the network security games~\cite{assimakopoulos1987network,aaai/GanAVG17,ijcai/ZhangATWGJ17,atal/SchlenkerTXFTTV18}.
	In a network security game, there is an underlying graph, where each node of the graph represents a target with a defending requirement and a value to protect.
	The game is played between a defender who allocates defensive resources to the nodes of the graph and an attacker who picks a node to attack, depending on how the nodes are defended.
	
	Many existing works consider the setting when the allocated resource can be shared between neighboring nodes~\cite{ijcai/YinXGAJ15}.
	For example, Gan et al.~\cite{aaai/GanAV15} considered a network security game in which allocating one unit of resource to some target protects not only the target but also the neighboring targets.
	Li et al.~\cite{aaai/LiTW20} studied the model in which the defending power of each node $u$ is determined by the resource $r_u$ allocated to $u$, plus a linear function of the resources allocated to its neighbors.
	These models are mainly motivated by surveillance or patrolling applications, in which when a node $u$ shares resource with its neighbor, we do not need to worry about the defending power of $u$.
	
	However, for defending problems in which the attack is contagious, it is necessary to take into account the decrease in the defending power of node $u$, especially when $u$ is at the risk of being involved in the attack.
	Consider a contagious attack, e.g., the spread of a virus, on a node $v$.
	Suppose the attack spreads to the neighbors of $v$ and can cause damage at each of the nodes the attack spreads to, depending on how well the node is defended.
	In this case, if we measure the defending power of $v$ by taking into account the resources shared from its neighbor $u$, then naturally, we need to consider the decrease in the defending power of $u$.
	
	Ideally, a node $u$ can only \emph{transfer} (a fraction of) the resource it owns to its neighbor $v$, which increases the defending power of the receiver $v$ but decreases its own defending power.
	When defending against attacks without spreading effects, this assumption is equivalent to being able to duplicate resources between neighbor nodes, as we can always transfer the maximum possible resources towards the node under attack.
	However, when the attack can spread to neighbors of the node under attack, this assumption demands a stronger defending requirement.
	Specifically, the following example shows that when resources can only be transferred (instead of being duplicated), the total resource required to obtain a good defending result can be much larger.
	
	\begin{example}\label{example:transfer-vs-share}
		Consider a star graph, with node $u$ in the center, and $v_1,v_2,\ldots,v_{n-1}$ being neighbors of $u$.
		Suppose each node requires $1$ unit of resource to defend himself.
		Suppose node $u$ is attacked and the attack spreads to all neighbors of $u$.
		When resources can be duplicated, allocating one unit of resource at node $u$ guarantees that every node is sufficiently defended, and thus no loss is incurred.
		However, when resources can only be transferred, as long as the total resources allocated are less than $n$ units, there always exists at least one insufficiently defended node.
	\end{example}
	
	In the paper, we consider the problem of defending against contagious attack, in which the defending resources can only be transferred between neighboring nodes.
	Specifically, when the attacker attacks a node $u$ in the network, the attack spreads to neighbors of $u$ and may cause damage at multiple nodes.
	The defender decides an allocation strategy of defending resources to nodes in the graph before the attack happens, and is allowed to transfer some resources between neighboring nodes (subject to some capacity constraints) when the attack happens.
	Our model is motivated by real-world applications like defending against virus spreading.
	In these applications, it is reasonable to assume that we can transfer medical resources or doctors between neighboring cities or countries in order to minimize the damage when the virus breaks out.
	Unfortunately, existing models fail to capture such applications as most of them do not consider the reallocation of defending resources.

	\subsection{Our Results}
	
	We study the problem of computing optimal allocations and reallocations of defending resources.
	Since our main motivation of the problem is defending against virus spreading and, in real world, the allocation of defending resources is usually public information, we focus only on pure strategies, i.e., deterministic defending algorithms.
	We propose a mathematical model that generalizes that of~\cite{aaai/GanAV15,aaai/LiTW20}, and assume that (1) an attack spreads to a subset of nodes and may cause damage at each of them; (2) defending resources can be transferred between neighboring nodes, which we refer to as a reallocation of resources.
	%
	The objective is to minimize the maximum possible damage due to an attack.
	
	We show that this general model is difficult in two aspects.
	We first show that even with a given allocation of resources and a node that is attacked, computing the optimal reallocation is \NP-hard (Section~\ref{sec:reallocation}).
	Then we show that if no reallocation is allowed, the problem of computing the optimal allocation strategy is also \NP-hard (Section~\ref{ssec:hardness-allocation}).
	
	Regarding positive results, we provide mixed integer linear programs (MILPs) to model the computation of allocation and reallocation strategies (in Section~\ref{ssec:MILP-main}).
	We show that the optimal solutions for the MILPs provide optimal allocation and reallocation strategies.
	Since solving an MILP is not guaranteed to terminate in polynomial time, we also propose polynomial time algorithms for special cases and approximation algorithms.
	We give a polynomial time algorithm that decides whether there exists a defending strategy in which no loss incurs, and outputs one if it exists (Section~\ref{ssec:perfect-allocation}).
	Then we give a polynomial time bi-criteria $(\frac{1}{1-\epsilon},\frac{1}{\epsilon})$-approximation algorithm, for any $\epsilon\in(0,1)$ (see Section~\ref{ssec:bicriteria-approx} for a formal definition of bi-criteria approximations).
	Specifically, for $\epsilon=0.5$ we have a bi-criteria $(2,2)$-approximation.
	Moreover, we show that under the Unique Game Conjecture~\cite{jcss/KhotR08}, there does not exist $(2-\delta,2-\delta)$-approximation, for any constant $\delta > 0$.
	
	Finally, we extensively evaluate our algorithms on synthetic and real-world datasets in Section~\ref{sec:experiments}.
	
	\subsection{Other Related Work}
	As mentioned, there is a sequence of existing works in the network security game domain that consider resource sharing between nodes.
	Gan et al.~\cite{aaai/GanAV15,aaai/GanAVG17} consider models in which allocating a unit of defending resource to a node can also protect the neighbors of that node. 
	Their models only study the binary version of resource allocation, i.e., $r_u\in\{0,1\}$.
	Yin et al.~\cite{ijcai/YinXGAJ15} also study a model in which the resource can be shared, and they assume sharing resources takes time.
	However, these existing models does not consider the contagious attacks or the resource reallocation.
	
	There are also works that study the contagion in network security games~\cite{gamenets/NguyenAB09,allerton/BachrachDG13,aamas/VorobeychikL15,jet/AcemogluMO16,expert/LouSV17,goyal2014attack,aspnes2006inoculation}.
	Besides, Tsai et al.~\cite{aaai/TsaiNT12} study a zero-sum two-player influence blocking maximization game, in which the attacker and the defender try to maximize their influence on a network.
	However, these works do not model the problem in terms of allocating defending resources to meet defending requirements and minimizing the loss due to attack, and thus are incomparable to our model.
	There are other works that study contagion of attack by assuming that an insufficiently protected node can affect the defending result of its neighboring nodes~\cite{games/ChanCO17,aaai/LiTW20}.
	There are also works that study game-theoretic models of the security games~\cite{kunreuther2003interdependent,gamesec/JohnsonGCC10,uai/ChanCO12}.

	\section{Model Description}
	
	We model the network as an undirected\footnote{While we assume the graph is undirected, it can be verified that all our results extend straightforwardly to directed graphs.} connected graph $G(V,E)$, where each node $u \in V$ has a \emph{threshold} $\lb_u$ that represents the defending requirement, and a \emph{value} $\alpha_u$ that represents the possible damage due to an attack at node $u$.
	We use $N(u):= \{v: (u,v)\in E\}$ to denote the set of neighbors for node $u\in V$.
	We use $N_k(u)$ to denote the set of nodes at distance at most $k$ from $u\in V$.
	By definition we have $N_1(u) = \{u\}\cup N(u)$.
	We use $n$ and $m$ to denote the number of nodes and edges in the graph $G$, respectively.

	\subsection{Defending Resource and Defending Power}
	
	The defender has a total resource of $R$ that can be distributed to nodes in $V$, where $r_u$ is the \emph{defending resource}\footnote{Similar to~\cite{aaai/LiTW20}, we assume the resource can be allocated continuously in our model.} allocated to node $u$, and $\sum_{u\in V}r_u = R$.
	Each node $u$ can transfer at most $w_{uv}\cdot r_u$ units of defending resource to each of its neighbor $v$, where $w_{uv}\in [0,1]$ is the \emph{weight} of edge $(u,v)$, which represents the efficiency (or willingness) when transferring defending resource between $u$ and $v$.
	
	\begin{definition}[Allocation Strategy]
		We use $r_u\geq 0$ to denote the resource allocated to node $u$.
		We use $\mathbf{r} = \{ r_u \}_{u\in V}$ to denote an \emph{allocation strategy}.
	\end{definition}
	
	\begin{definition}[Reallocation Strategy]
		We use $t(u,v)\geq 0$ to denote the resource $u$ transfers to its neighbor $v$.
		In general $v$ can also send resource to node $u$ (which is denoted by $t(v,u)\geq 0$).
		We use $\mathbf{t} = \{t(u,v),t(v,u)\}_{(u,v)\in E}$ to denote a \emph{reallocation strategy}.
	\end{definition}
	
	The fractions of resource transferred between $u$ and $v$ are upper bounded by the edge weight as follows:
	\[
		t(u,v)\leq w_{uv}\cdot r_u, \qquad t(v,u)\leq w_{uv}\cdot r_v.
	\]
	
	That is, each node $u$ can transfer at most $w_{uv}$ fraction of the resource $r_u$ to its neighbor $v$.
	Additionally, we need to guarantee that the total resources node $u$ sends out is at most the total resource it owns:
	\[
		\textstyle	\sum_{v\in N(u)} t(u,v) \leq r_u.
	\]
	
	Since the resources can be sent and received, the defending power of a node is not fixed.
	Instead, depending on the attack, the defending power at each node can be adaptive by deciding an appropriate reallocation strategy.
	
	\begin{definition}[Defending Power]
		The defending power of node $u$ is defined as the total resource node $u$ owns after the reallocation, which is given as follows:
		\[
			p_u = r_u - \sum_{v\in N(u)} t(u,v) + \sum_{v\in N(u)} t(v,u).
		\]
		We use $\mathbf{p} = \{p_u\}_{u\in V}$ to denote defending powers of nodes.
	\end{definition}
	
	Depending on the reallocation, the defending power $p_u$ of node $u$ can take values in range $[\bar{p}_u,\hat{p}_u]$, where
	\begin{align*}
		\bar{p}_u & = \max\{\textstyle 1-\sum_{v\in N(u)}w_{uv},0 \} \cdot r_u,  \\ 
		\hat{p}_u & = r_u+ \textstyle \sum_{v\in N(u)} w_{uv}\cdot r_v.
	\end{align*}
	
	Note that the allocation strategy $\mathbf{r}$ (which allocates the defending resources) must be decided before the attack happens.
	In contrast, the defender can decide the reallocation strategy depending on which node is attacked.
	Specifically, the defender can define $n$ reallocation strategies $\{ \mathbf{t}^u \}_{u\in V}$, one for each node when it is attacked.
	
	Put differently, there are four sequential steps:
	\begin{itemize}
		\item[(1)] the algorithm decides an allocation strategy $\mathbf{r}$, which allocates a total of $R$ resources;
		\item[(2)] the attacker picks a node $u$ to attack;
		\item[(3)] the algorithm decides a reallocation strategy $\mathbf{t}^u$ to minimize the loss due to the attack. Note that at this point, the allocation strategy is fixed, but the defending power depends on the reallocation strategy.
		\item[(4)] the loss due to the attack is evaluated.
	\end{itemize}
	
	\begin{definition}[Defending Strategy]
		We refer to a solution for the defending problem as a \emph{defending strategy} $(\mathbf{r},\{\mathbf{t}^u\}_{u\in V})$, which consists of an allocation strategy $\mathbf{r}$ and $n$ reallocation strategies $\{ \mathbf{t}^u \}_{u\in V}$.
	\end{definition}

	\subsection{Loss Due to An Attack}
	
	Next, we define the loss due to an attack. Let $\mathbf{p} = \{p_u\}_{u\in V}$ be the defending powers of nodes.
	Suppose $u$ is attacked, the attack spreads to all nodes in $N_k(u)$, where $k$ is a parameter that represents the level of contagiousness of the attack.
	The loss due to the attack is the total damage caused at nodes in $N_k(u)$, where each node $v\in N_k(u)$ suffers from a damage of $\alpha_v$ if $p_v < \lb_v$.
	If $p_v \geq \lb_v$, then no damage is caused at $v$.
	
	\begin{definition}[Defending Result]
		Given defending strategy $(\mathbf{r},\{\mathbf{t}^u\}_{u\in V})$, let $\textsf{Loss}(u)$ be the total damage when $u$ is attacked and the reallocation strategy $\mathbf{t}^u$ is deployed. 
		The defending result is defined as the maximum loss due to an attack, i.e., $\max_{u\in V} \textsf{Loss}(u)$.
	\end{definition}
	
	The objective of the problem is to compute a defending strategy with the minimum defending result.
	We use $\opt$ to denote the optimal (minimum) defending result.
	In the remaining part of the paper, we use \dma (Defending against Contagious Attack) to refer to the problem of computing the defending strategy against contagious attack.
	Note that the decision problem of verifying whether a defending strategy has result at most some value is in \NP.
	Given the defending strategy, the verification can be done by computing $\textsf{Loss}(u)$ for every node $u$ and taking the maximum, both of which take polynomial time.

	\paragraph{Remark.}
	When $k=0$, there is no spreading effect and we only need to protect the node under attack by borrowing defending resources from its neighbors.
	Hence in this case we have $p_u = \hat{p}_u$ if node $u$ is attacked.
	Consequently, the problem degenerates to the single-threshold model of~\cite{aaai/LiTW20}, which can be solved in polynomial time.
	However, in general (when $k\geq 1$), when the attack spreads to multiple nodes, the reallocation must be carefully designed so as to protect multiple nodes, because when a node transfers resource to its neighbors, its own defending power decreases.

	\section{Optimal Response to an Attack}\label{sec:reallocation}
	
	As a warm-up towards further analysis, in this section, we first focus on the subproblem of computing optimal reallocations.
	That is, given a fixed allocation strategy $\mathbf{r} = \{r_u\}_{u\in V}$ and suppose node $u$ is under attack, we compute the reallocation strategy $\mathbf{t}^u$ with which $\textsf{Loss}(u)$ is minimized.
	%
	%
	The following example shows how an appropriate reallocation of resources helps reduce the damage due to an attack.
	
	\begin{example}
		Consider the graph given in Figure~\ref{fig:example-reallocation}(a), and node $a$ is under attack.
		Assuming $k=1$, the attack spreads to $N_1(a) = \{ a,b,d,e \}$.
		Suppose (1) all edges have weight $0.5$; (2) $\lb_a = 4, \lb_b = \lb_d = 2$ and $\lb_e = 3$; and (3) all nodes have defending resource $2$.
		Obviously, without any reallocation, we suffer from a total loss of $\alpha_a+\alpha_e$ since only nodes $b$ and $d$ are sufficiently defended.
		However, if we reallocate the resources as shown in Figure~\ref{fig:example-reallocation}(b), then all nodes in $N_1(a)$ are well defended, and no loss incurs.
		
		\begin{figure}[htb]
			\centering
			\subfigure[]{
				\centering
				\includegraphics[width = 0.22\textwidth]{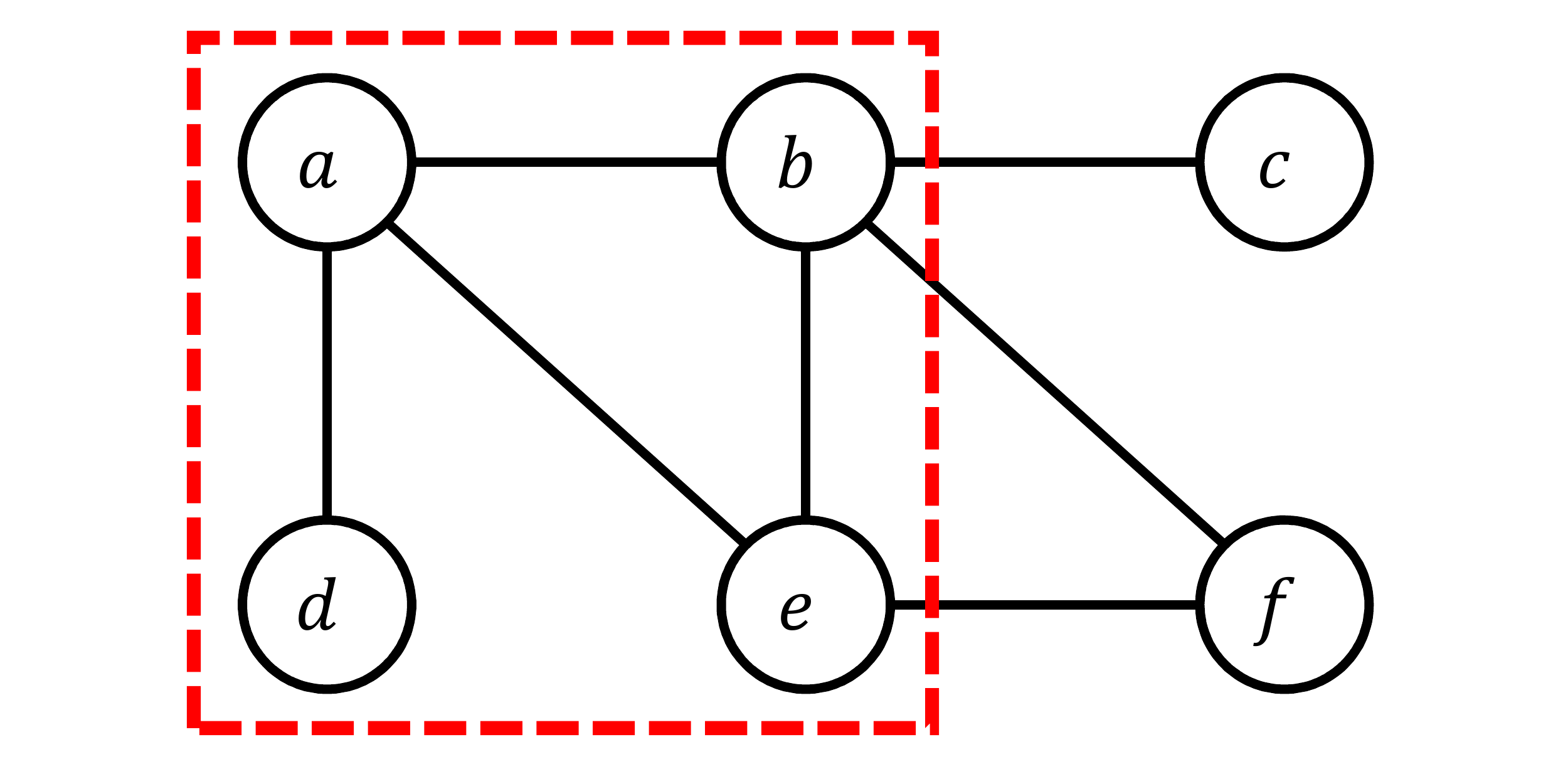}
			}
			\subfigure[]{
				\centering
				\includegraphics[width = 0.22\textwidth]{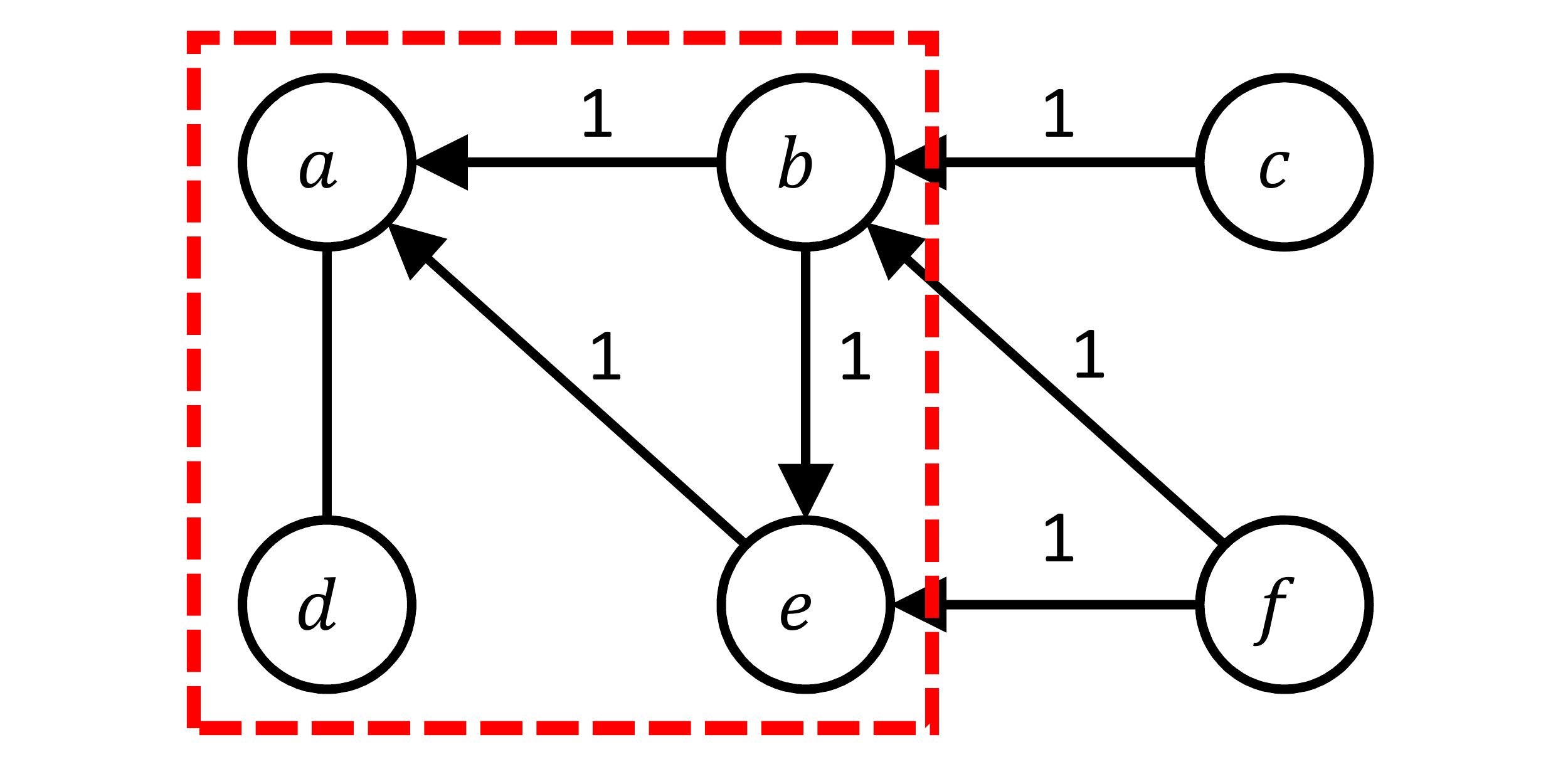}
			}
			\vspace*{-10pt}
			\caption{Example of a reallocation strategy, where a directed edge indicates a transfer of resource. For example, the edge from $b$ to $a$ with value $1$ indicates that node $b$ transfers $t(b,a)=1$ unit of resource to node $a$.}
			\label{fig:example-reallocation}
		\end{figure}
	\end{example}
	
	However, in general, we cannot guarantee that there always exists a reallocation strategy under which all nodes under attack are well defended.
	In this case, we need to compute a reallocation strategy to minimize the total loss.
	For example, we can choose to protect nodes $u$ with larger value $\alpha_u$ while leaving some nodes $v$ with smaller $\alpha_v$ insufficiently defended.
	Unfortunately, we show that the problem of computing the optimal reallocation strategy is \NP-hard.
	For space reasons, we move the proof of the following hardness result to the full version of the paper. 
	
	\begin{theorem}\label{th:hardness}
		Unless \P=\NP, there does not exist any polynomial time algorithm that, given an allocation strategy and a node under attack, computes the optimal reallocation strategy, for any $k\geq 1$.
	\end{theorem}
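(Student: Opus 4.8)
The plan is to reduce from the \NP-complete \textsc{Knapsack} decision problem: given items $1,\dots,n$ with positive integer weights $w_1,\dots,w_n$ and values $v_1,\dots,v_n$, a capacity $W$, and a target $V$, decide whether some $S\subseteq\{1,\dots,n\}$ satisfies $\sum_{i\in S}w_i\le W$ and $\sum_{i\in S}v_i\ge V$. Given such an instance I would take $G$ to be a star with centre $u$ and leaves $x_1,\dots,x_n$, all edge weights equal to $1$; put all the resource at the centre, $r_u=W$ and $r_{x_i}=0$; encode the items on the leaves via $\lb_{x_i}=w_i$ and $\alpha_{x_i}=v_i$; and make the centre undefendable by setting $\lb_u=W+1$ and $\alpha_u=1$. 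The node declared under attack is $u$. Since every node of a star lies within distance $1$ of its centre, $N_k(u)=V$ for every $k\ge1$, so a single construction handles all contagiousness levels at once.

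The heart of the argument is that optimal reallocations of this instance correspond exactly to optimal knapsack solutions. Because the leaves start with no resource and are pairwise non-adjacent, in any reallocation $\mathbf t^u$ we may assume $t(x_i,\cdot)=0$ for all $i$, so that $p_u=W-\sum_i t(u,x_i)$ and $p_{x_i}=t(u,x_i)$, subject only to $t(u,x_i)\ge0$ and $\sum_i t(u,x_i)\le W$. The centre is never defended since $\lb_u=W+1>W\ge p_u$, so it always contributes a fixed $\alpha_u=1$ to the loss. A leaf $x_i$ is defended iff $t(u,x_i)\ge w_i=\lb_{x_i}$, hence sending a leaf an amount strictly between $0$ and $w_i$ is wasted and we may assume $t(u,x_i)\in\{0,w_i\}$. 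Writing $S=\{i:t(u,x_i)=w_i\}$ for the set of defended leaves, feasibility of the reallocation amounts exactly to $\sum_{i\in S}w_i\le W$, and $\textsf{Loss}(u)=1+\sum_{i\notin S}v_i=1+\sum_i v_i-\sum_{i\in S}v_i$.

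Consequently the minimum of $\textsf{Loss}(u)$ over all reallocation strategies equals $1+\sum_i v_i-\max\{\sum_{i\in S}v_i:\sum_{i\in S}w_i\le W\}$, which is at most $1+\sum_i v_i-V$ if and only if the knapsack instance is a yes-instance. The construction has size polynomial in the knapsack instance, and given a reallocation strategy one evaluates $\textsf{Loss}(u)$ in polynomial time; hence a polynomial-time algorithm computing an optimal reallocation would decide \textsc{Knapsack} in polynomial time, forcing $\P=\NP$.

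I expect the only delicate point to be the reduction of arbitrary continuous reallocations to the all-or-nothing transfers $t(u,x_i)\in\{0,w_i\}$ — that is, verifying that fractional transfers never help. In this star gadget that step is immediate, since a leaf is defended only upon receiving its full threshold and has nothing to forward; the gadget is tailored so that the continuous reallocation model degenerates to $0/1$ knapsack. Note that non-uniform thresholds and values are essential here: with unit parameters the reallocation problem on a star is trivially solvable, so the hardness genuinely relies on the interplay between the transfer constraints and the per-node thresholds.
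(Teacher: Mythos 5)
Your reduction is correct, but it takes a genuinely different route from the paper. The paper reduces from \textsc{Maximum Independent Set}: it keeps the MIS graph, gives every node resource $n$, threshold $n+d(v)$, and edge weights $1/n$, so that a node can reach its threshold only by receiving the maximum transferable amount from \emph{every} neighbor while sending nothing; hence the set of defended nodes must be an independent set, and minimizing the loss is exactly maximizing an independent set. Your gadget instead collapses all graph structure into a star and makes the hardness purely numeric: the center holds budget $W$, each leaf is an item with threshold $w_i$ and value $v_i$, and the optimal reallocation is exactly an optimal \textsc{Knapsack} solution. Your argument that fractional transfers never help is sound (a leaf has nothing to forward and is saved only by receiving its full threshold), and the single star handles all $k\ge1$ since $N_1(u)=V$, so the theorem as stated follows. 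The one substantive difference in strength: \textsc{Knapsack} is only weakly \NP-hard, and indeed your star instance admits a pseudo-polynomial dynamic program, so your reduction leaves open efficient algorithms when thresholds and resources are polynomially bounded integers. The paper's MIS reduction uses only numbers of magnitude $O(n)$ and therefore establishes \emph{strong} \NP-hardness (and ties the reallocation problem to a combinatorial obstruction in the graph rather than to number encoding). Your approach buys simplicity and a very transparent gadget; the paper's buys a stronger hardness guarantee. Both are valid proofs of the stated theorem.
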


	
	Next, we formulate the problem of computing the optimal reallocation strategy as a Mixed Integer Linear Program (MILP).
	Recall that we are given an allocation strategy $\mathbf{r}$ and a node $u$ that is attacked.
	\begin{align}
		\text{minimize} \quad \textstyle\sum_{v\in N_k(u)} (1-x_v)\cdot \alpha_v & \nonumber\\
		\text{subject to } \quad \textstyle r_v - \sum_{z\in N(v)}t(v,z) + &\textstyle \sum_{z\in N(v)} t(z,v) \nonumber \\
		\geq \lb_v\cdot x_v, & \quad \forall v \in N_k(u) \label{constraint:reallocation-power} \\
		0\leq t(v,z) \leq w_{vz}\cdot r_v, &\quad \forall z,v\in V \label{constraint:reallocation-edge-bound} \\
		\textstyle \sum_{z\in N(v)}t(v,z) \leq r_v, & \quad \forall v\in V \label{constraint:reallocation-total-bound} \\
		x_v\in \{0,1\}, & \quad \forall v\in N_k(u). \nonumber
	\end{align}
	
	For each node $v\in N_k(u)$ we introduce an integer variable $x_v\in\{0,1\}$ that indicates whether $p_v \geq \lb_v$.
	We introduce fractional variables $t(v,z),t(z,v)$ for each $(v,z)\in E$. 
	The objective of the MILP is the total loss due to the attack, which is the sum of values $\alpha_v$ for $v\in N_k(u)$ that is not well defended ($x_v = 0$).
	Constraints~\eqref{constraint:reallocation-power} guarantee that if we set $x_v = 1$, then $v$ should be well defended, i.e., $p_v \geq \theta_v$.
	Constraints~\eqref{constraint:reallocation-edge-bound} and~\eqref{constraint:reallocation-total-bound} ensure that the transfers of resource between neighboring nodes are feasible.
	
	Note that $\{r_v\}_{v\in V}$ are given and are not variables.
	
	The optimal solution $(\mathbf{x},\mathbf{t})$ for the MILP gives an optimal reallocation $\mathbf{t}$ that minimizes $\textsf{Loss}(u)$, with the fixed allocation $\mathbf{r}$ and node $u$ that is attacked.
	
	\paragraph{Remark.}
	There are redundant variables that can be removed from the MILP.
	Recall that $N_k(u)$ are the nodes the attack spreads to.
	For each $v\in V\setminus N_k(u)$, we have no defending requirements and thus do not need to transfer any resources towards these nodes.
	Consequently, it is unnecessary to introduce variable $t(z,v)$, for any $z\in N(v)$.
	In other words, we only introduce the variable $t(z,v)$	 if $v\in N_k(u)$.
	With this observation, we can reduce the total number of fractional variables from $|E|$ to $\sum_{v\in N_k(u)} |N(v)|$, which is much smaller when $k$ is small and the graph is sparse.
	
	\medskip
	
	Note that the MILP can not be solved exactly in time polynomial in $|N_k(u)|$.
	%
	A natural idea is to relax the integer variables $\mathbf{x}$ to take values in $[0,1]$.
	However, the following instance shows that the integrality gap between the MILP and its LP relaxation is unbounded.
	
	\begin{example}[Integrality Gap] \label{example:integrality-gap}
		Consider the trivial graph with only one node $u$, where $\lb_u=\alpha_u=1$.
		Suppose $R = r_u = 1-\epsilon$, where $\epsilon>0$ is arbitrarily small. 
		Obviously we have $\textsf{Loss}(u) = 1$.
		However, the optimal objective of the LP relaxation is $\epsilon$, by setting $x_u = 1-\epsilon$.
	\end{example}

	\paragraph{Observations.}
	While the integrality gap of MILP and its LP relaxation is unbounded, we still have two useful observations.
	First, the optimal objective of the LP relaxation provides a lower bound on the optimal objective of the MILP, which will be utilized to do a pruning on the MILP in later sections.
	Second, for a fixed $\{0,1\}$-vector $\mathbf{x}\in \{0,1\}^{N_k(u)}$, the MILP becomes a feasibility LP, which can be solved efficiently.
	For example, we use this idea to compute defending strategies with defending result $0$ in Section~\ref{ssec:perfect-allocation}.
	We also extend this idea in Section~\ref{ssec:bicriteria-approx} to compute a polynomial time bi-criteria approximation.
	The idea is to find a vector $\mathbf{x}\in \{0,1\}^{N_k(u)}$ for which the induced LP is feasible, and the objective $\sum_{v\in V}(1-x_v)\cdot \alpha_v$ is as small as possible.

	\section{Computing the Defending Strategy}
	
	In this section, we consider the computation of defending strategies and extend the observations and ideas from the previous section.
	Recall that the defending result is $\max_{u\in V} \textsf{Loss}(u)$, and is uniquely determined by the defending strategy $(\mathbf{r},\{ \mathbf{t}^u \}_{u\in V})$.
	We have shown in Theorem~\ref{th:hardness} that given a fixed allocation strategy and a node under attack, computing the optimal reallocation strategy is \NP-hard.
	However, the hardness result does not necessarily imply a hardness result for computing the allocation strategy.
	In the following, we show that computing the allocation strategy is indeed \NP-hard.
	
	\subsection{Hardness}\label{ssec:hardness-allocation}
	
	We first define a simple special case of the \dma problem called \emph{isolated model}, and then show that even for this special case, the problem is \NP-hard.
	
	\begin{definition}[Isolated Model]
		We refer to the \dma problem where $w_{uv}=0$ for all $(u,v)\in E$ as the \emph{isolated model}.
	\end{definition}
	
	Note that in the isolated model, the defending strategy consists of only an allocation strategy since no reallocation is allowed.
	When $k=0$, the special case can be solved trivially by greedily allocating resources to the nodes with maximum value, because the defending result is defined by the not-well-defended node with maximum value. 
	
	However, in contrast to the case when $k=0$, we show that when $k\geq 1$, the problem becomes \NP-hard.
	
	\begin{theorem}\label{th:hardness-allocation}
		Computing the optimal defending strategy is \NP-hard for $k\geq 1$, even for the isolated model with identical thresholds.
	\end{theorem}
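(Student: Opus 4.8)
The plan is to reduce from Vertex Cover. First I would pin down what the isolated model with identical thresholds $\lb$ is really asking. Since $w_{uv}=0$ on every edge, no resource can ever be moved, so the defending power of a node $v$ is always exactly $r_v$; hence $v$ contributes $\alpha_v$ to $\textsf{Loss}(u)$ for every $u$ with $v\in N_k(u)$ unless $r_v\ge\lb$, and giving a node strictly less than $\lb$ is pointless. Thus an optimal allocation simply chooses a set $S$ of at most $\lfloor R/\lb\rfloor$ nodes to make ``safe'' (assigning each exactly $\lb$ and parking any leftover resource on an irrelevant node), and its defending result equals $\max_{u\in V}\sum_{v\in N_k(u)\setminus S}\alpha_v$. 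So being able to compute an optimal defending strategy is equivalent to being able to solve: choose $S$ with $|S|\le\lfloor R/\lb\rfloor$ minimizing the largest uncovered value over all $k$-neighborhoods.

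Given a Vertex Cover instance $(H=(W,F),j)$ --- where I may assume $H$ is connected, since Vertex Cover remains \NP-hard on connected graphs --- I would build a \dma instance $G$ as follows. Create one node $x_w$ for each $w\in W$; for each edge $e=(a,b)\in F$ join $x_a$ and $x_b$ by a fresh path of length $2k$ that is internally disjoint from all other such paths, and call its midpoint $c_e$, so that $c_e$ lies at distance exactly $k$ from each of $x_a$ and $x_b$. (For $k=1$ this is just subdividing every edge once, i.e.\ the incidence graph of $H$.) Set every threshold to $1$, every edge weight to $0$, set $\alpha_{x_w}=1$ for all $w$ and $\alpha_v=0$ for every other node, and set $R=j$.

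The heart of the argument is a locality analysis of the $k$-neighborhoods. For a center $c_e$ with $e=(a,b)$, walking $k$ steps along its path reaches $x_a$ on one side and $x_b$ on the other, and one cannot pass either endpoint from $c_e$ without exceeding distance $k$, so the only value-$1$ nodes in $N_k(c_e)$ are $x_a$ and $x_b$; hence $\textsf{Loss}(c_e)\le 1$ exactly when $S\cap\{x_a,x_b\}\neq\emptyset$. For a vertex node $x_w$, each incident path is entered for at most $k$ steps, reaching only value-$0$ nodes (internal path nodes and centers), so $x_w$ is the unique value-$1$ node of $N_k(x_w)$ and $\textsf{Loss}(x_w)\le 1$ always; a similar count shows every internal path node that is not a center sees at most one $x$-node within distance $k$, so its loss is at most $1$ too. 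Therefore the defending result is at most $1$ iff, for every edge $e$, at least one of $x_a,x_b$ lies in $S$, that is, iff $S\cap\{x_w:w\in W\}$ is a vertex cover of $H$; and since covering anything other than $x$-nodes cannot decrease a center's loss, such an $S$ with $|S|\le j$ exists iff $H$ has a vertex cover of size at most $j$. Hence $\opt\le 1$ for the constructed instance iff $(H,j)$ is a yes-instance, giving \NP-hardness for every $k\ge 1$.

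I expect the main obstacle to be getting the distance bookkeeping exactly right: one has to verify that from a center $c_e$ one cannot slip into the paths of the other edges incident to $a$ or $b$ (which relies on those endpoints sitting at distance precisely $k$), that the edge-paths are pairwise internally disjoint, and that $G$ inherits connectivity from $H$. The remaining ingredients --- that fractional ``under-coverage'' of a node is always wasteful, that the equality $\sum_u r_u=R$ is harmless because surplus resource can be dumped on a value-$0$ node, and a tiny $\epsilon$-perturbation of the zero values should the model require strict positivity --- are routine.
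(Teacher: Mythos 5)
Your proof is correct and follows essentially the same route as the paper's: a reduction from Vertex Cover in which every edge is subdivided by value-$0$, threshold-$1$ nodes (with all weights $0$), so that a defending result of at most $1$ is equivalent to the defended original nodes forming a vertex cover. The only difference is that you subdivide each edge into a path of length $2k$ so the argument covers every fixed $k\ge 1$ explicitly, whereas the paper's construction sets $k=1$; the underlying idea is identical.
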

	\begin{proof}
		We prove by a reduction from the (unweighted) vertex cover (VC) problem, which is known to be \NP-hard~\cite{tcs/ChlebikC06}.
		Given an instance $G_{vc} = (V_{vc},E_{vc})$, the VC problem is to select a minimum size subset $S \subseteq V_{vc}$ such that each edge $(u,v)\in E_{vc}$ has at least one endpoint in $S$.
		We construct an instance $G = (V,E)$ of the \dma problem in which $w_{uv} = 0$ for all edges $(u,v)\in E$ and $\lb_u = 1$ for all nodes $u\in V$ as follows.
		
		Let the instance $G$ of the \dma problem be obtained by inserting a node for every edge $(u,v)\in E_{vc}$, splitting the edge.
		Specifically, we first initialize $G = G_{vc}$.
		Then for each $e = (u,v)\in E_{vc}$, we remove $e$, insert a new node $u_e$ and two edges $(u,u_e), (u_e,v)$ into $E$.
		We refer to these nodes (that split edges) the \emph{splitting nodes}, and the other nodes as \emph{original nodes}.
		Note that each splitting node has exactly two neighbors, both of which are original nodes.
		The neighbors of each original node are all splitting nodes.
		Note that we have $|V| = |V_{vc}|+|E_{vc}|$ and $|E| = 2|E_{vc}|$.
		Set $\alpha_u = 0$ for splitting nodes, and $\alpha_u = 1$ for original nodes.
		In other words, only the original nodes are valuable and worth defending.
		Let $\lb_u = 1$ for all $u\in V$ and $k=1$.
		
		Observe that since resource cannot be transferred, the optimal allocation strategy assigns resource either $0$ or $1$ to each original node, and $0$ to each splitting node.
		We call a node $u$ \emph{defended} if $r_u = 1$, \emph{undefended} otherwise.
		Since $k=1$, when the attacker chooses to attack an original node $u$, the total loss is $0$ if $u$ is defended, $1$ otherwise.
		However, if the attacker attacks a splitting node, the total loss is the number of undefended neighbors of the splitting nodes, which can be $2$.
		
		Suppose there exists an allocation strategy using total resource $R$ for which the defending result is at most $1$, then there must exist a vertex cover of size at most $R$ for $G_{vc}$.
		Specifically, the defended original nodes form a vertex cover for $G_{vc}$ (otherwise, there exists a splitting node whose two neighbors are both undefended). 
		Hence if there exists a polynomial time algorithm for the \dma problem, then we can use binary search on $R \in \{1,2,\ldots,|V_{vc}|-1\}$ to identify the minimum $R$ with which the defending result is $1$.
		Consequently, we can compute a minimum vertex cover in polynomial time, which is a contradiction.
	\end{proof}
	
	Interestingly, we show that the reduction also implies a hardness of approximation.
	
	\begin{corollary}
		For any $c<2$, computing a $c$-approximation defending strategy when $k\geq 1$ is \NP-hard, even for the isolated model with identical thresholds.
	\end{corollary}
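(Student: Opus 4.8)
The plan is to revisit the reduction from Theorem~\ref{th:hardness-allocation} and extract a factor-$2$ gap from it. Recall that from a vertex cover instance $G_{vc}=(V_{vc},E_{vc})$ we built a \dma instance $G$ in the isolated model with identical thresholds $\lb_u=1$, with $\alpha_u=1$ on the original nodes, $\alpha_u=0$ on the splitting nodes, and with $k=1$. The first observation I would record is purely structural: in $G$ the defending result of \emph{any} defending strategy is an integer in $\{0,1,2\}$. Indeed, each loss term $(1-x_v)\alpha_v$ is $0$ or $1$; attacking an original node can only damage that node itself (its splitting neighbors have value $0$), costing at most $1$; and attacking a splitting node can damage only its two original neighbors, costing at most $2$.

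Next I would pin down the optimum $\opt$ as a function of the total budget $R$, restricted to integer $R\in\{1,\dots,|V_{vc}|-1\}$. If $G_{vc}$ has a vertex cover $S$ with $|S|\le R$, allocate one unit to each original node in $S$; then every splitting node has a defended neighbor, so its loss is at most $1$, and hence the defending result is at most $1$. Since protecting \emph{all} original nodes would cost $|V_{vc}|>R$, the result cannot be $0$, so $\opt=1$. Conversely, if $G_{vc}$ has no vertex cover of size $\le R$, then for any allocation the set of original nodes receiving at least $\lb_u=1$ unit has size at most $R$ (their total resource is at most $R$) and hence fails to cover some edge $e=(u,v)$; both $u$ and $v$ are then undefended, so attacking the splitting node $u_e$ causes loss $2$ and $\opt=2$. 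Thus $\opt\in\{1,2\}$, with $\opt=1$ exactly when $G_{vc}$ has a vertex cover of size $\le R$.

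Finally, suppose there is a polynomial-time $c$-approximation algorithm with $c<2$. Run it on $G$ with budget $R$. If $\opt=1$, it returns a strategy with defending result at most $c\cdot 1<2$, which, being an integer in $\{0,1,2\}$, equals $0$ or $1$; if $\opt=2$, it cannot return a strategy with result below $2$, so it returns result $2$. Hence the output tells us whether $G_{vc}$ has a vertex cover of size $\le R$, and iterating over $R\in\{1,\dots,|V_{vc}|-1\}$ (or binary searching) computes the minimum vertex cover number; a standard self-reduction then recovers an optimal cover, which would imply $\P=\NP$. For $k>1$ the same argument applies to the corresponding extension of the reduction used in Theorem~\ref{th:hardness-allocation}.

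The only delicate point is the case analysis for $\opt$ when allocations are allowed to be fractional: I must make sure that a budget $R<|V_{vc}|$ genuinely forces the defending result to be $2$ whenever there is no size-$R$ vertex cover, so that ``partial'' fractional protection of original nodes buys nothing, and that the result is $1$ (never $0$) otherwise. Once this dichotomy is secured, the gap is exactly $2$ and the contradiction with any sub-$2$ approximation is immediate.
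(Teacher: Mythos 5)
Your proof is correct and follows essentially the same route as the paper: both exploit the fact that in the vertex-cover reduction with budget $R<|V_{vc}|$ the optimal defending result lies in $\{1,2\}$, so any $c$-approximation with $c<2$ must return the exact value and hence decides whether a vertex cover of size at most $R$ exists. Your write-up is somewhat more careful than the paper's (it explicitly rules out result $0$, characterizes $\opt=1$ versus $\opt=2$ via existence of a size-$R$ cover, and notes that fractional allocations cannot help), but the underlying argument is the same.
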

	\begin{proof}
		Observe that in the above reduction, for any $R < |V_{vc}|$, the defending result is either $1$ or $2$.
		Let $\opt$ be the optimal defending result and $\alg$ be that of the $c$-approximation algorithm, where $c<2$.
		Note that both $\opt$ and $\alg$ take values in $\{1,2\}$.
		Observe that for $\opt = 1$, we must have $\alg = 1$ since otherwise the approximation ratio is $2$.
		Similarly, for $\opt = 2$, we have $\alg = 2$.
		Hence any better-than-$2$ approximation algorithm is equivalent to an exact algorithm, and the corollary follows from Theorem~\ref{th:hardness-allocation}.
	\end{proof}

	\subsection{MILP Formulation}\label{ssec:MILP-main}
	
	Nevertheless, we show that we can formulate the computation of the optimal defending strategy as an MILP as we have done in Section~\ref{sec:reallocation}.
	Similar as before, we introduce a set of variables for the case when $u$ is under attack: we introduce an integer variable $x^u_v\in\{0,1\}$ for each $v\in N_k(u)$, which indicates whether $p_v \geq \lb_v$ when $u$ is under attack; we also introduce a variable $t^u(z,v)$ for each $v\in N_k(u)$ and  $z\in N(v)$, which represents the resource $z$ sends to $v$.
	
	Unlike before, where the allocation strategy is given, here we introduce a variable $r_u$ to denote the resource allocated to node $u\in V$.
	We also changed the objective from minimizing $\textsf{Loss}(u)$ to minimizing $\max_{u\in V} \textsf{Loss}(u)$, by introducing a variable $\textsf{Loss}$ that is at least $\textsf{Loss}(u) = \sum_{v\in N_k(u)} (1-x^u_v) \alpha_v$ for all $u\in V$.
	The computation of the defending strategy is then formulated as follows.
	\begin{align}
		\text{minimize } \quad\qquad \textsf{Loss}\qquad\quad \nonumber& \\
		\text{subject to} \quad \textstyle \sum_{u\in V} r_u \leq R, \quad \nonumber & \\
		\textstyle r_v - \sum_{z\in N(v)\cap N_k(u)} t^u(v,z) +  & \textstyle \sum_{z\in N(v)}t^u(z,v)  \nonumber \\
		\geq \lb_v\cdot x^u_v, \quad & \forall u, v \label{constraint:power} \\
		0\leq t^u(v,z) \leq w_{vz}\cdot r_v,\quad & \forall u, v, z \label{constraint:edge-bound} \\
		\textstyle \sum_{z\in N(v)\cap N_k(u)} t^u(v,z) \leq r_v, \quad & \forall u, v, z \label{constraint:total-bound} \\
		\textstyle\sum_{v\in N_k(u)} (1-x^u_v) \alpha_v \leq \textsf{Loss}, \quad & \forall u \label{constraint:total-loss} \\
		x^u_v\in \{0,1\}, \nonumber \quad & \forall u, v.
	\end{align}
	
	Similar as before, the set of constraints~\eqref{constraint:power} guarantees that the defending power of a node $v$ is at least $\theta_v$ when $x^u_v=1$.
	Constraints~\eqref{constraint:edge-bound} and~\eqref{constraint:total-bound} guarantee feasibility of transfers of resource.
	Constraints~\eqref{constraint:total-loss} ensure $\textsf{Loss} = \max_{u\in V} \textsf{Loss}(u)$ in the optimal solution.
	As before, we only need to introduce variable $t^u(z,v)$ if $v\in N_k(u)$ and $z\in N(v)$.
	We use MILP$(R)$ to denote the above program that uses total resource $R$.
	Note that in the program $r_u$'s and $t^u(z,v)$'s are fractional variables while $x^u_v$'s are integer variables.
	%
	We denote by LP$(R)$ the linear program relaxation when we replace each constraint $x^u_v\in \{0,1\}$ with $x^u_v\in [0,1]$.
	As Example~\ref{example:integrality-gap} shows, the integrality gap of LP$(R)$ and MILP$(R)$ is unbounded.
	Nevertheless, LP$(R)$ provides a lower bound for MILP$(R)$, which can be used for a pruning on MILP.
	
	\paragraph{Prunings.}
	Suppose we have a lower bound $l$ of the optimal defending result $\opt$, i.e., the optimal objective of MILP$(R)$.
	Then for every node $u$ with $\sum_{v\in N_k(u)}\alpha_v \leq l$, we can remove all variables with superscript $u$ and all constraints containing such variables.
	The reason is, when $u$ is attacked, the maximum loss (even if we do not allocate or reallocation any resource) is at most $\sum_{v\in N_k(u)} \alpha_v$.
	Given $\opt \geq l$, not defending nodes in $N_k(u)$ does not increase the objective of MILP$(R)$.
	Note that the optimal solution of LP$(R)$ gives one such lower bound $l$.
	The closer the optimal objectives of LP$(R)$ and MILP$(R)$ are, the better the pruning reduces the size of MILP$(R)$.
	We can further reduce the number of constraints by exploiting the dominance between them.
	For example, if for a node $v$ we have $\sum_{z\in N(v)}w_{vz}\leq 1$ then Constraint~\eqref{constraint:total-bound} of node $v$ will be dominated by Constraints~\eqref{constraint:edge-bound}, and hence can be removed.
	On the other hand, if for a node $v$ we have $w_{vz} = 1$ for all $z\in N(v)$ then Constraints~\eqref{constraint:edge-bound} of node $v$ will be dominated by Constraint~\eqref{constraint:total-bound}, and hence can be removed.

	\subsection{Existence of Perfect Defending Strategy}\label{ssec:perfect-allocation}
	
	While the general problem of computing the optimal allocation strategy is \NP-hard, we show in this section that deciding whether there exists a defending strategy with defending result $0$ (which we refer to as a \emph{perfect defending strategy}) is polynomial time solvable.
	Moreover, if they exist, then we can compute one in polynomial time.
	
	\begin{theorem}\label{th:perfect-allocation}
		For every $k\geq 0$, there exists a polynomial time algorithm that computes a perfect defending strategy for the \dma, if perfect defending strategies exist.
	\end{theorem}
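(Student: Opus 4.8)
The plan is to reduce the problem to a single linear‑programming feasibility test, exploiting the second observation following Example~\ref{example:integrality-gap}: once the integer vector $\mathbf{x}$ is fixed, MILP$(R)$ collapses to a linear program. Observe that a defending strategy $(\mathbf{r},\{\mathbf{t}^u\}_{u\in V})$ is perfect precisely when $\max_{u\in V}\textsf{Loss}(u)=0$, i.e.\ when $\textsf{Loss}(u)=0$ for every $u\in V$, which in turn holds iff every node $v\in N_k(u)$ satisfies $p_v\geq\lb_v$ under the reallocation $\mathbf{t}^u$. In the language of the MILP this is exactly the requirement $x^u_v=1$ for all $u\in V$ and all $v\in N_k(u)$. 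So I would take MILP$(R)$, substitute $x^u_v=1$ everywhere, and drop the (now vacuous) objective together with Constraints~\eqref{constraint:total-loss}, obtaining the linear system in the real variables $\{r_u\}_{u\in V}$ and the transfer variables, consisting of $\sum_{u\in V} r_u\leq R$ together with Constraints~\eqref{constraint:power}, \eqref{constraint:edge-bound} and~\eqref{constraint:total-bound} with each $x^u_v$ replaced by the constant $1$. Call this linear program $\mathrm{FEAS}(R)$; the algorithm solves it and outputs the returned point if it is feasible, and reports ``no perfect defending strategy exists'' otherwise.

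Two things need to be verified: the equivalence and the running time. For the equivalence, if a perfect defending strategy exists then, by the discussion above, its allocation and reallocations form a feasible point of $\mathrm{FEAS}(R)$; conversely, any feasible point $(\mathbf{r},\{\mathbf{t}^u\}_{u\in V})$ of $\mathrm{FEAS}(R)$ is itself a defending strategy, since Constraints~\eqref{constraint:edge-bound} and~\eqref{constraint:total-bound} are precisely the feasibility conditions on each reallocation $\mathbf{t}^u$, and Constraint~\eqref{constraint:power} with $x^u_v=1$ forces $p_v\geq\lb_v$ for every $v\in N_k(u)$, hence $\textsf{Loss}(u)=0$ for every $u$ and the defending result is $0$. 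For the running time, the number of real variables is $n$ allocation variables plus at most $\sum_{u\in V}\sum_{v\in N_k(u)}|N(v)|\le n^3$ transfer variables, and the number of constraints is likewise polynomial in $n$, so $\mathrm{FEAS}(R)$ can be solved (a feasible point found, or infeasibility certified) in time polynomial in the input size by any polynomial‑time LP algorithm. The argument is uniform in $k\geq 0$: the parameter $k$ only controls which sets $N_k(u)$ appear in the constraints, and for $k=0$ it simply recovers the requirement $\hat p_u\geq\lb_u$ for each attacked node.

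I do not expect a genuine obstacle here. The only point that needs care is that a perfect strategy must use one common allocation $\mathbf{r}$ that simultaneously admits a loss‑free reallocation against each of the $n$ possible attacked nodes; but this coupling is automatically respected because all $n$ attack scenarios share the variables $\{r_u\}_{u\in V}$ inside the single program $\mathrm{FEAS}(R)$. A secondary detail is that $\mathrm{FEAS}(R)$ is a pure feasibility program, so there is no boundedness issue to address—one merely asks whether the polytope it defines is nonempty, and, if so, extracts a vertex (or any interior point) as the witnessing defending strategy.
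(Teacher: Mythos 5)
Your proposal is correct and follows essentially the same route as the paper's proof: fix $x^u_v=1$ for every integer variable of MILP$(R)$, observe that the program collapses to a feasibility LP whose constraints couple all $n$ attack scenarios through the shared allocation variables, and solve it in polynomial time. (One shared caveat: a perfect strategy only needs $p_v\geq\lb_v$ for nodes with $\alpha_v>0$, so strictly one should fix $x^u_v=1$ only for such $v$; this does not change the argument.)
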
	
	\begin{proof}
		Recall that MILP$(R)$ computes the optimal defending strategy.
		If there exist perfect defending strategies, then we have $\textsf{Loss} = 0$ in the optimal solution for MILP$(R)$.
		Since $\textsf{Loss} \geq \sum_{v\in N_k(u)}(1-x^u_v)\alpha_v$, we must have $x^u_v = 1$ for all integer variables in the optimal solution.
		
		Therefore, by fixing $x^u_v = 1$ for all integer variables, MILP$(R)$ must be feasible.
		Observe that after fixing an assignment to the integer variables, MILP$(R)$ becomes a feasibility LP, which can be solved exactly in polynomial time.
		Any feasible solution $(\mathbf{r},\{ \mathbf{t}^u \}_{u\in V})$ for the LP provides a perfect defending strategy, as claimed.
	\end{proof}

	\subsection{Bi-criteria Approximation} \label{ssec:bicriteria-approx}
	
	As Example~\ref{example:integrality-gap} indicates, it is impossible to obtain any bounded approximation of the reallocation by rounding the LP relaxation of MILP$(R)$.
	However, we show that by augmenting the total resource we use, good approximation solutions (in terms of defending results) can be obtained.
	
	\begin{definition}[Bi-criteria Approximation]
	    We call a defending strategy $(\gamma,\beta)$-approximate if it uses $R$ total resource and its defending result is at most $\gamma\cdot \opt$, where $\opt$ is the optimal defending result using $R/\beta$ resource.
	\end{definition}
	
	While it is not possible to obtain bounded (standard) approximations by rounding LP$(R)$, we show that achieving bi-criteria approximations is possible.
	We defer the proof of the following theorem to the full version of the paper.
	
	\begin{theorem}\label{th:bicriteria-approx}
		For any $\epsilon \in (0,1)$, we can compute a $(\frac{1}{1-\epsilon}, \frac{1}{\epsilon})$-approximate defending strategy in polynomial time.
		In particular, with $\epsilon=0.5$ we can compute a $(2,2)$-approximate solution in polynomial time.
	\end{theorem}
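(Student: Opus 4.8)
The plan is to round an optimal solution of the LP relaxation, but Example~\ref{example:integrality-gap} shows that no bounded rounding of LP$(R)$ itself is possible, so the whole point is to exploit the extra factor $1/\epsilon$ in the resource budget: I would solve the LP relaxation at the \emph{reduced} budget $\epsilon R$ (note $\epsilon R=R/\beta$ with $\beta=1/\epsilon$) and then scale the fractional solution back up to budget $R$, which promotes every node that is ``fractionally defended to extent at least $\epsilon$'' into a genuinely defended node.

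In detail, first solve LP$(\epsilon R)$ in polynomial time, obtaining an optimal fractional solution $(\mathbf{r}^{\ast},\{x^{\ast u}_v\},\{\mathbf{t}^{\ast u}\})$ with objective value $\opt_{\mathrm{LP}}$. Since the optimal defending strategy that uses $R/\beta=\epsilon R$ resource is an integral feasible solution of MILP$(\epsilon R)$, and hence also feasible for LP$(\epsilon R)$, we have $\opt_{\mathrm{LP}}\le\opt$, where $\opt$ denotes the optimal defending result with budget $R/\beta$.

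Second, construct the defending strategy by scaling: put $r_v=r^{\ast}_v/\epsilon$ for every $v\in V$, and $t^u(v,z)=t^{\ast u}(v,z)/\epsilon$ for every relevant index. The total resource is $\sum_v r_v=R$, and since the capacity constraints~\eqref{constraint:edge-bound} and~\eqref{constraint:total-bound} are homogeneous in $(\mathbf{r},\mathbf{t})$, the scaled transfers remain feasible. The defending power of each $v\in N_k(u)$ when $u$ is attacked becomes $p^u_v=p^{\ast u}_v/\epsilon\ge\theta_v x^{\ast u}_v/\epsilon$; hence every $v$ with $x^{\ast u}_v\ge\epsilon$ satisfies $p^u_v\ge\theta_v$ and is declared defended ($\hat x^u_v=1$), while every $v$ with $x^{\ast u}_v<\epsilon$ is abandoned ($\hat x^u_v=0$), which is consistent with~\eqref{constraint:power} because $p^u_v\ge 0$ always (by~\eqref{constraint:total-bound}).

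Third, bound the loss: when $u$ is attacked, $\textsf{Loss}(u)\le\sum_{v\in N_k(u):\,x^{\ast u}_v<\epsilon}\alpha_v$. Each such $v$ has $1-x^{\ast u}_v>1-\epsilon$, so $(1-\epsilon)\sum_{v:\,x^{\ast u}_v<\epsilon}\alpha_v\le\sum_{v\in N_k(u)}(1-x^{\ast u}_v)\alpha_v\le\opt_{\mathrm{LP}}\le\opt$, giving $\textsf{Loss}(u)\le\opt/(1-\epsilon)$. Taking the maximum over $u$, the constructed strategy uses resource $R$ and has defending result at most $\frac{1}{1-\epsilon}\opt$, i.e.\ it is $(\frac{1}{1-\epsilon},\frac{1}{\epsilon})$-approximate; setting $\epsilon=1/2$ yields the $(2,2)$-approximation. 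The only step needing genuine insight is solving the LP at the smaller budget $\epsilon R$ rather than at $R$: this is exactly what lets a uniform $1/\epsilon$ rescaling convert every ``$\epsilon$-defended'' node into a defended one without exceeding total resource $R$, circumventing the unbounded integrality gap. The remaining pieces (polynomial-time LP solvability, homogeneity of the constraints, and the averaging bound on the abandoned $\alpha_v$'s) are routine.
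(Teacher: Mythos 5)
Your proposal is correct and is essentially identical to the paper's own proof: solve LP$(\epsilon R)$, scale the $r$ and $t$ variables by $1/\epsilon$, round $x^u_v$ up to $1$ exactly when $x^u_v\ge\epsilon$, and charge the abandoned nodes via the $1/(1-\epsilon)$ averaging bound against the LP objective, which lower-bounds $\opt$ at budget $\epsilon R$. No further comment is needed.
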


	Interestingly, we show that under the Unique Game Conjecture (UGC)~\cite{jcss/KhotR08}, there do not exist strong Pareto improvements over our bi-criteria $(2,2)$ approximation ratio.
	The proof is deferred to the full version of the paper.
	
	\begin{lemma}\label{lemma:ugc-hardness}
		Under UGC, there does not exist polynomial time $(2-\delta,2-\delta)$-approximate algorithm for the \dma problem, for any constant $\delta>0$.
	\end{lemma}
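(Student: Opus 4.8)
The plan is to reduce the gap version of the vertex cover problem --- whose hardness under UGC is exactly the Khot--Regev theorem~\cite{jcss/KhotR08} --- to the bi-criteria approximation of \dma, reusing the edge-splitting gadget from the proof of Theorem~\ref{th:hardness-allocation}. Recall that Khot and Regev showed that, assuming UGC, for every constant $\eta>0$ it is \NP-hard to distinguish graphs $G_{vc}$ on $n_{vc}$ vertices whose minimum vertex cover size $\tau(G_{vc})$ satisfies $\tau(G_{vc})\le(\tfrac12+\eta)\,n_{vc}$ from those with $\tau(G_{vc})\ge(1-\eta)\,n_{vc}$. Given such a $G_{vc}$, I would build the \dma gadget $G$ of Theorem~\ref{th:hardness-allocation}: split every edge, give value $1$ to the original nodes and value $0$ to the splitting nodes, set every threshold to $1$, put $w\equiv 0$ (the isolated model), and take $k=1$.

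The first step is to pin down the optimal defending result of $G$ as a function of the available budget $R$. Because no resource can be reallocated and a node is defended iff it gets at least one unit, an allocation of budget $R$ defends a set $T$ of original nodes with $|T|\le R$, and, as observed in the proof of Theorem~\ref{th:hardness-allocation}, attacking a splitting node shows that the defending result is $2$ unless $T$ is a vertex cover of $G_{vc}$, in which case it is $1$ (and it is $0$ only when all $n_{vc}$ original nodes are defended, which needs $R\ge n_{vc}$). Hence, writing $\opt(R)$ for the optimal defending result of the gadget, $\opt(R)=2$ for $R<\tau(G_{vc})$, $\opt(R)=1$ for $\tau(G_{vc})\le R<n_{vc}$, and $\opt(R)=0$ for $R\ge n_{vc}$; in particular $\opt(R)\in\{0,1,2\}$ always, so any defending strategy whose result is strictly below $2$ has result at most $1$, and then the original nodes it defends form a vertex cover of $G_{vc}$ of size at most the budget it uses.

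Now suppose, towards a contradiction, that $\mathcal{A}$ is a polynomial-time $(2-\delta,2-\delta)$-approximation for \dma. We may assume $\delta\le 1$: a $(2-\delta,2-\delta)$-approximation is also a $(2-\delta',2-\delta')$-approximation for every $\delta'\le\delta$, since raising the factors only widens the comparison budget-reduction and only weakens the guarantee, so the claim for all $\delta\in(0,1]$ yields it for all $\delta>0$. Fix the constant $\eta=\tfrac{\delta}{4(3-\delta)}$, which satisfies $\eta<\tfrac{\delta}{2(3-\delta)}\le\tfrac14$; these two inequalities give respectively $(2-\delta)(\tfrac12+\eta)<1-\eta$ and $\eta<\tfrac12$ (so the two gap-VC cases are disjoint and the comparison budget stays below $n_{vc}$). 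Run $\mathcal{A}$ on $G$ with budget $R=(2-\delta)(\tfrac12+\eta)\,n_{vc}$, and answer ``yes'' iff the returned strategy has defending result at most $1$. In the yes-case, $\tau(G_{vc})\le(\tfrac12+\eta)\,n_{vc}=R/(2-\delta)<n_{vc}$, so $\opt(R/(2-\delta))=1$ and $\mathcal{A}$ must output a strategy of result at most $(2-\delta)\cdot 1<2$, i.e.\ at most $1$. In the no-case, if $\mathcal{A}$ output a strategy of result at most $1$, the original nodes it defends would form a vertex cover of $G_{vc}$ of size at most $R=(2-\delta)(\tfrac12+\eta)\,n_{vc}<(1-\eta)\,n_{vc}$, contradicting $\tau(G_{vc})\ge(1-\eta)\,n_{vc}$; hence its result is $2$. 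Thus $\mathcal{A}$ would decide the \NP-hard gap problem in polynomial time, contradicting UGC.

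I expect the delicate point to be the single choice of budget $R$, because the bi-criteria relaxation weakens the reduction on two fronts simultaneously. The factor $2-\delta$ on the objective must be kept from promoting a result of $1$ to a result of $2$, which is precisely why the yes-case leans on $(2-\delta)\cdot 1<2$ and on the gadget's result being integral; and the factor $2-\delta$ on the resource must be kept from stretching a budget of about $n_{vc}/2$ into a vertex cover of size about $n_{vc}$, which is precisely where the factor-$2$ Khot--Regev gap is tight against an augmentation factor strictly below $2$. Once the budget is chosen so that $(2-\delta)(\tfrac12+\eta)$ sits strictly between $\tfrac12$ and $1-\eta$, everything else is the routine arithmetic above.
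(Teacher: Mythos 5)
Your proof is correct, and it uses exactly the paper's reduction gadget (the edge-splitting instance from Theorem~\ref{th:hardness-allocation}, isolated model, $k=1$) and the same hardness source (Khot--Regev). The only real difference is the endgame. The paper invokes the corollary form of Khot--Regev (``no polynomial-time $(2-\delta)$-approximation for vertex cover''), runs the hypothetical bi-criteria algorithm for every budget $R=1,\ldots,|V_{vc}|-1$, and extracts from the smallest successful budget $R^*$ an actual vertex cover of size at most $(2-\delta)\tau(G_{vc})$, i.e., it converts the \dma algorithm into a $(2-\delta)$-approximation algorithm for VC. You instead work directly with the underlying gap problem ($\tau\le(\tfrac12+\eta)n_{vc}$ versus $\tau\ge(1-\eta)n_{vc}$), calibrate a single budget $R=(2-\delta)(\tfrac12+\eta)n_{vc}$ so that the objective slack ($(2-\delta)\cdot 1<2$ plus integrality) and the resource slack ($(2-\delta)(\tfrac12+\eta)<1-\eta$) are simultaneously controlled, and decide the gap instance with one invocation. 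Both arguments lean on the same two facts --- the gadget's defending result is integral in $\{0,1,2\}$, and a result of $1$ certifies a vertex cover of size at most the budget --- so the proofs are interchangeable; your version is somewhat more self-contained in that it makes the quantitative form of the UGC hardness explicit and avoids the loop over budgets, while the paper's version is shorter because it delegates the gap arithmetic to the cited $(2-\delta)$-inapproximability statement.
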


	\paragraph{Implementation.}
	In practice, we can enumerate different $\epsilon\in (0,1)$ to compute different defending strategies, and then pick the one with the best defending result.
	In the following, we show that we might be able to improve the defending result further by deploying a more aggressive rounding on $\mathbf{x}$.
	Specifically, we first solve $\textsf{LP}(\epsilon\cdot R)$ and get the optimal solution.
	Then we pick some $\tau \in [0,\epsilon]$, round each $x$ variable that is less than $\tau$ to $0$, and those at least $\tau$ to $1$.
	With the fixed integer variables, we solve MILP$(R)$, which has become a feasibility LP.
	If the resulting LP is feasible, then we obtain a defending strategy with defending result at most $\frac{1}{1-\tau}\cdot \opt$, where $\opt$ is the optimal defending result of defending strategies using $\epsilon\cdot R$ resources.
	Hence the resulting solution is a $(\frac{1}{1-\tau},\frac{1}{\epsilon})$-approximate defending strategy.
	For different problem instances, the minimum $\tau$ with which the induced LP is feasible can be different.
	However, the LP must be feasible when $\tau = \epsilon$.
	As we will show in our experiments, in all datasets we consider, the defending result after optimizing $\tau$ is much smaller than using $\tau = \epsilon$.

	\section{Experimental Evaluation}\label{sec:experiments}
	
	In this section, we evaluate the effectiveness and efficiency of our algorithms on synthetic and real-world datasets.
	Our datasets contain synthetic graphs, including random graphs and power-law distribution graphs, which are well recognized as the best in modeling random networks and social networks.
	We also consider real-world networks, including aviation networks and social networks, in order to demonstrate the practical performance of our algorithms on defending against contagious attacks in the real world. 
	The datasets are generated as follows. For each dataset, $n$ and $m$ denote the number of nodes and edges, respectively.
	
	
	\begin{table}[ht]
		\centering
		\resizebox{\linewidth}{!}{
			\begingroup
			\renewcommand{\arraystretch}{1.2}
			\begin{tabular}{ c | c c c | c c c }
				\toprule
				& Rand & Pow-S & Pow-L & USAir & FB & Twit \\
				\midrule
				\# Node & 200 & 400 & 700 & 221 & 600 & 1000 \\
				\# Edge & 803 & 1579 & 2087 & 2166 & 4638 & 13476 \\
				\bottomrule
			\end{tabular}
			\endgroup}
		\label{tbl_data}
	\end{table}
	
	\begin{itemize}
		\item \textbf{Random}:
		We generate the dataset with $n=200$ and $p=0.04$ using the algorithm by~\cite{gnp}, where there is an edge between each pair of nodes independently with probability $p$.
		The thresholds $\lb_u$'s and values $\alpha_u$'s are chosen uniformly at random from integers in $[1,10]$.
		The edge weights $w_{uv}$'s are uniformly chosen from $[0.3,1]$.
		
		\item \textbf{Power-law distribution graphs (Pow)}:
		We use the graph generator by NetworkX~\cite{SciPyProceedings_11} to generate the power-law distribution graphs, where we set the parameters\footnote{For the details regarding how the parameters define the graph, please refer to~\url{https://networkx.github.io/documentation/networkx-1.10/reference/generated/networkx.generators.random_graphs.powerlaw_cluster_graph.html}.} to be $(400, 4, 0.5)$ for Pow-S and $(700, 3, 0.5)$ for Pow-L.
		The parameters $\lb_u$'s, $\alpha_u$'s and $w_{uv}$'s are generated randomly as before (for dataset {Rand}).
		
		\item \textbf{USAir}: 
		We select the flight records in the US from years $2008$ to $2010$~\cite{USAir} to generate a directed graph where each node represents a city. There is a directed edge from city $u$ to city $v$ if the number of flights per week from $u$ to $v$ is at least $25$.
		We set the edge weight as the ratio between the flights-per-week of the edge and the maximum flights-per-week value of all edges.
		We set $\lb_u$ and $\alpha_u$ as the population (in millions) of city $u$.
		
		\item \textbf{Social networks}:
		We use the network of Facebook (undirected) and Twitter (directed) to generate our datasets~\cite{fb&twit}.
		The dataset FB (resp. Twit) is extracted from the source network by picking a random node in the network and expand using breath-first-search until the size of the dataset reaches $n = 600$ (resp. $n=1000$).
		We set $\lb_u = \alpha_u = w_{uv} = 1$ for all nodes and edges.
	\end{itemize}
	
	\subsubsection{Experiment Environment.} We perform our experiments on an AWS Ubuntu 18.04 machine with 32 threads and 128GB RAM without GPU. We use Gurobi optimizer~\cite{gurobi} as our solver for the LPs and MILPs.
	
	\medskip
	
	We evaluate the effectiveness of our exact and approximation algorithms by comparing the results of defending strategies under different settings and datasets.
	Throughout all the experiments, we fix the contagiousness parameter $k=2$.
	
	\subsubsection{Effectiveness of Reallocation.} One of the main innovations of our work is that we consider the reallocation of defending resources between the nodes.
	The reallocation allows the algorithm to react adaptively against the attack.
	In particular, we compare the results of defending strategies with and without reallocation as follows.
	
	We first use the algorithm in Section~\ref{ssec:perfect-allocation} to compute for each dataset the minimum total resource required in a perfect defending strategy (a strategy with defending result $0$).
	As our experiment (in Table~\ref{tbl_reall_1}) shows, reallocation (see the row with $w\neq 0$) always helps in reducing the requirement on the defending resource, for all datasets.
	For example, for the first dataset Rand, the resource required in a perfect defending strategy is 40\% less than the case when reallocation is not allowed (see the row with $w=0$).
	
	\begin{table}[ht]
		\centering
		\resizebox{\linewidth}{!}{
			\begingroup
			\renewcommand{\arraystretch}{1.2}
			\begin{tabular}{ c | c c c c c c}
				\toprule
				& Rand & Pow-S & Pow-L & USAir & FB & Twit \\
				\midrule
				$w = 0$ & 1037 & 1892 & 3406 & 341 & 600 & 1000 \\
				$w \neq 0$ & 587 & 1687 & 2320 & 340 & 524 & 623 \\
				\bottomrule
			\end{tabular}
			\endgroup}
		\caption{Resource required to achieve defending result $0$.}
		\label{tbl_reall_1}
	\end{table}

	\vspace*{-5pt}
	
	\subsubsection{Comparing Different Algorithms.} We also evaluate the effectiveness of our bi-criteria approximations from Section~\ref{ssec:bicriteria-approx}, and compare it with the exact solution and the Greedy algorithms.
	The results are shown in Table~\ref{tbl_eptau}, where BA$(\epsilon)$ stands for the approximation algorithm by rounding the optimal solution for LP$(\epsilon\cdot R)$ and optimizing $\epsilon\in(0,1)$; BA$(\epsilon,\tau)$ stands for the approximation algorithm that further optimizes $\tau\in ( 0,\epsilon ]$ in the more aggressive rounding.
	We use Greedy to refer to the algorithm that greedily allocates resources to nodes with the maximum value (break tie arbitrarily) and does not use reallocation; Greedy-R is based on Greedy but uses greedy reallocation.
	Specifically, when node $u$ is attacked, for each node $v\in N_k(u)$ in decreasing order of their values, the algorithm transfers resource to $v$ until its defending power is at least its threshold, or when no more resource can be transferred from its neighbors.	
	In the experiments, we fix $R = 0.5\cdot \sum_{u\in V} \lb_u$ for all datasets.
	
	\begin{table}[ht]
		\centering
		\resizebox{\linewidth}{!}{
			\begingroup
			\renewcommand{\arraystretch}{1.2}
			\begin{tabular}{ c | c c c c c c}
				\toprule
				& Rand & Pow-S & Pow-L & USAir & FB & Twit \\
				\midrule
				Greedy & 278 & 859 & 1168 & 178.7 & 188 & 302 \\
				Greedy-R & 225 & 819 & 1025 & 178.7 & 186 & 281 \\
				BA$(\epsilon)$ & 289 & 1134 & 1230 & 291.8 & 109 & 148 \\
				BA$(\epsilon,\tau)$ & 107 & 785 & 701 & 204.3 & 58 & 55 \\
				\midrule
				Exact & 69 & 740 & 616 & 170.3 & 51 & 53 \\
				\bottomrule
			\end{tabular}
			\endgroup}
		\caption{Defending results by the bi-criteria approximations.}
		\label{tbl_eptau}
	\end{table}
	
	\noindent From Table~\ref{tbl_eptau}, we observe that by deploying a more aggressive rounding on the fractional solutions, the approximation solutions by BA$(\epsilon,\tau)$ outperform BA$(\epsilon)$ dramatically, and are very close to the optimal solutions in all datasets.
	Moreover, in general, BA$(\epsilon,\tau)$ achieves much smaller defending results when compared with both Greedy approaches in most datasets.
	The only exception is the dataset USAir, in which the values of nodes differ greatly. Thus, Greedy allocation of resources achieves the almost optimal result.
	BA$(\epsilon)$ does not perform well because the solution is obtained by a very loose rounding on the solution of LP$(\epsilon\cdot R)$.
	Observe that with the help of reallocation, Greedy-R obtains advantages over Greedy, which again demonstrates the critical role of reallocation.
	Note that the defending results by Pow-S are larger than that of Pow-L because it has a larger density of edges, which crucially affects the number of attacked nodes.

	\subsubsection{Efficiency Evaluation.}
	Finally, we evaluate the efficiency of our algorithms and summarize the running times in Table~\ref{tbl_rtime}, where No-Prune refers to the algorithm by solving the MILP without using the pruning we mentioned in Section~\ref{ssec:MILP-main}; Pruning refers to the one with pruning; BA$(\epsilon,\tau)$ refers to our bi-criteria approximation algorithm.
	
	\begin{table}[ht]
		\centering
		\resizebox{\linewidth}{!}{
			\begingroup
			\renewcommand{\arraystretch}{1.2}
			\begin{tabular}{ c |c c c c c c}
				\toprule
				& Rand & Pow-S & Pow-L & USAir & FB & Twit \\
				\midrule
				No-Prune & 4637 & 8180 & 36293 & 10686 & 32608 & 8441 \\
				Pruning & 4259 & 3351 & 8227 & 2920 & 18929 & 2757 \\
				BA$(\epsilon,\tau)$ & 246 & 867 & 1508 & 416 & 2540 & 1294 \\
				\bottomrule
			\end{tabular}
			\endgroup}
		\caption{Running times of different algorithms (in seconds).}
		\label{tbl_rtime}
	\end{table}
	
	As we can see from the Table~\ref{tbl_rtime}, the running times for solving MILPs are obviously improved after pruning, which shows the effectiveness of removing redundant variables.
	In particular, for the dataset Pow-L, which admits the long tail phenomenon, there is a 77\% improvement on the running time after pruning.
	Our approximation algorithm improves the running time even further, e.g., is several times faster than that of Pruning in all datasets, which demonstrates its great efficiency in practical use.
	Recall that from Table~\ref{tbl_eptau}, the defending results of our approximation algorithms are very close to the exact solutions by MILP.
	
	\clearpage
	\bibliography{ref}

	\clearpage
	
	\appendix
	
	\section{Missing Proofs}
	
	\begin{proofof}{Theorem~\ref{th:hardness}}
		We prove this by a reduction from the maximum independent set (MIS) problem, which is \NP-hard~\cite{mst/BermanF99}.
		In the MIS problem we are given a graph ${G_{mis}=(V_{mis},E_{mis})}$ and the problem is to find a maximum size set $S \subseteq V_{mis}$ such that no two vertices in $S$ are adjacent.
		We construct an instance of the \dma problem as follows.
		We first initialize the graph structure ${G=(V,E)}$ to be the same as $G_{mis}$, and set the parameters as follows.
		Let $n=|V_{mis}|$ and $d(u) = N(u)$ be the degree of node $u$ in $G$.
		\begin{itemize}
			\item Let $r_v = n$, $\theta_v=n+d(v)$ and $\alpha_v=1$ for all $v \in V$.
			\item Let $w_{uv} = \frac{1}{n}$ for all $(u,v) \in E$.
		\end{itemize}
		
		Finally, we insert a new vertex $s$ to $V$ with $r_{s}=\alpha_{s}=0$, $\theta_s=1$, and let $s$ be connected with all other nodes with edges with weight $0$.
		Note that the final instance of \dma has $n+1$ nodes and $n+|E_{mis}|$ edges.
		
		Suppose $s$ is under attack.
		Since $N_1(s) = V$, the attack spreads to the whole graph $G$, for $k\geq 1$.
		
		Observe that for each node $u\in V_{mis}$, its resource $r_u = n$ is lower than its threshold $\theta_u = n+d(u)$.
		Moreover, its maximum possible defending power
		\begin{equation*}
			\hat{p}_u = n + \sum_{v\in N(u)} w_{uv}\cdot r_v = n + d(u) = \theta_u
		\end{equation*}
		can be obtained only by (1) not transferring any resource to its neighbors; (2) receive $w_{uv}\cdot r_v = 1$ unit of resource from each of its neighbors.
		
		Hence under any reallocation strategy, if a node $u$ is well defended ($p_u \geq \theta_u$), then none of its neighbors $v\in N(u)$ is well defended ($p_v < \theta_v$).
		Let $S^*\subseteq V_{mis}$ be the set of nodes that are well defended under the optimal reallocation strategy, we have
		\begin{itemize}
			\item $S^*$ is an independent set (by the above argument);
			\item $\textsf{Loss}(s) = n - |S^*|$, since each $u\in V_{mis}$ has $\alpha_u = 1$.
		\end{itemize}
		
		Since $\textsf{Loss}(s)$ is minimized in the optimal reallocation strategy, we know that $|S^*|$ is maximized.
		In other words, $S^*$ is a maximum size independent set of $G_{mis}$.
		
		Since the reduction is polynomial time, if there exists a polynomial time algorithm to compute the optimal reallocation strategy, then we can solve the MIS problem in polynomial time, which is a contradiction.
		Consequently, computing the optimal reallocation strategy is \NP-hard.
	\end{proofof}
	
	\begin{proofof}{Lemma~\ref{lemma:ugc-hardness}}
		We use the same reduction from vertex cover problem as in the proof of Theorem~\ref{th:hardness-allocation}.
		It is shown in~\cite{jcss/KhotR08} that under the Unique Game Conjecture (UGC), there does not exist $(2-\delta)$-approximation for the vertex cover problem, for any constant $\delta>0$.
		Suppose there exists a polynomial time $(2-\delta,2-\delta)$-approximate algorithm for the \dma problem, we show that it can be transformed into a $(2-\delta)$-approximation algorithm for the vertex cover problem, which contradicts the UGC.
		
		Given any instance $G_{vc} = (V_{vc},E_{vc})$ of a VC problem instance, we construct an instance $G = (V,E)$ of \dma as in the proof of Theorem~\ref{th:hardness-allocation}.
		Then for each $R = 1,2,\ldots,|V_{vc}|-1$, we run the $(2-\delta,2-\delta)$-approximate algorithm to compute a defending strategy, and let $R^*$ be the smallest such that when $R = R^*$, the defending result is $1$.
		
		Suppose $S^*\subseteq V_{vc}$ is the minimum size vertex cover.
		By the construction of the \dma problem, when $R = |S^*|$, the optimal defending result is $\opt = 1$.
		Hence with $R = (2-\delta)|S^*|$ total defending resource, the $(2-\delta,2-\delta)$-approximate algorithm computes a defending strategy with defending result at most $(2-\delta)\cdot \opt = 2-\delta$.
		Since defending results are integers, the defending result by the approximation algorithm is $1$. 
		Observe that since all nodes have threshold $1$ and resource cannot be transferred, a defending strategy with $R = (2-\delta)|S^*|$ is equivalent to one with $R = \lfloor(2-\delta)|S^*|\rfloor$.
		In other words, when $R = \lfloor(2-\delta)|S^*|\rfloor$, the defending result of the approximation algorithm is $1$, which implies $R^* \leq \lfloor(2-\delta)|S^*|\rfloor$.
		Moreover, since the defending result is $1$, the set of defended nodes is a vertex cover.
		Hence we have found a vertex cover of size
		\begin{equation*}
			R^* \leq \lfloor(2-\delta)|S^*|\rfloor \leq (2-\delta)|S^*|,
		\end{equation*}
		which gives a $(2-\delta)$-approximation for the VC problem, and contradicts the UGC.
	\end{proofof}
    
	\begin{proofof}{Theorem~\ref{th:bicriteria-approx}}
		Recall that by the definition of bi-criteria approximations, we need to compute a strategy with the defending result at most $\frac{1}{1-\epsilon}\cdot \opt$, where $\opt$ is the optimal defending result of defending strategies that use $\epsilon\cdot R$ total defending resource, i.e., $\opt$ is the optimal objective of MILP$(\epsilon\cdot R)$.
		
		We first run $\textsf{LP}(\epsilon\cdot R)$ and obtain the optimal (fractional) solution.
		Note that the optimal objective of $\textsf{LP}(\epsilon\cdot R)$ is at most $\opt$, but in the solution each $x^u_v$ can take arbitrary values in $[0,1]$.
		In the following, we round the optimal solution $(\mathbf{x},\mathbf{r},\mathbf{t})$ of LP$(\epsilon\cdot R)$ and construct a feasible solution $(\hat{\mathbf{x}},\hat{\mathbf{r}},\hat{\mathbf{t}})$ for MILP$(R)$.
		We show that the objective of the solution is at most $\frac{1}{1-\epsilon}\cdot \opt$.
		Since each feasible solution for MILP$(R)$ corresponds to a defending strategy, the theorem follows.
		For each variable $x^u_v\in[0,1]$, let $\hat{x}^u_v = 1$ if $x^u_v \geq \epsilon$, and let $\hat{x}^u_v = 0$ otherwise.
		Let $\hat{r}_u = \frac{1}{\epsilon}\cdot r_u$ and $\hat{t}^u(z,v) = \frac{1}{\epsilon}\cdot t^u(z,v)$, for the corresponding variables.
		We show that the solution $(\hat{\mathbf{x}},\hat{\mathbf{r}},\hat{\mathbf{t}})$ we have constructed is feasible for MILP$(R)$.
		Since originally $\sum_{u\in V}r_u\leq \epsilon\cdot R$, we have $\sum_{u\in V} \hat{r}_u \leq R$, i.e., the first constraint of MILP$(R)$ is satisfied.
		Constraints~\eqref{constraint:power} in which $\hat{x}^u_v = 0$ are trivially satisfied.
		For those with $\hat{x}^u_v = 1$, since we increase $x^u_v$ by a factor of at most $\frac{1}{\epsilon}$ and increase all $r$ and $t$ variables by a factor of $\frac{1}{\epsilon}$, Constraints~\eqref{constraint:power} of MILP$(R)$ are satisfied.
		Since all $r$ and $t$ variables are scaled by the same factor, Constraints~\eqref{constraint:edge-bound}~\eqref{constraint:total-bound} of MILP$(R)$ are all satisfied.
		
		Finally, since we decrease each $x^u_v$ (to $0$) only if $x^u_v < \epsilon$, 	
		by rounding $\mathbf{x}$ into $\hat{\mathbf{x}}$, for each $u\in V$, we have
		\[
		\textstyle	\sum_{v\in N_k(u)} (1-\hat{x}^u_v)\alpha_v \leq \frac{1}{1-\epsilon} 
			\sum_{v\in N_k(u)} (1-x^u_v)\alpha_v.
		\]
		
		Hence the objective of solution $(\hat{\mathbf{x}},\hat{\mathbf{r}},\hat{\mathbf{t}})$ for MILP$(R)$, $\textsf{Loss} = \max_{u\in V}( \sum_{v\in N_k(u)} (1-\hat{x}^u_v)\alpha_v )$, is at most $\frac{1}{1-\epsilon}\cdot \opt$, as claimed.
	\end{proofof}
	
\end{document}